\PassOptionsToPackage{unicode}{hyperref}
\PassOptionsToPackage{hyphens}{url}
\PassOptionsToPackage{dvipsnames,svgnames,x11names}{xcolor}
\documentclass[12pt]{article}

\newcommand{\bc}{\textbf{c}}
\newcommand{\blam}{\boldsymbol \lambda}

\newcommand{\by}{\textbf{y}}
\newcommand{\bth}{\boldsymbol \theta}
\newcommand{\brho}{\boldsymbol \rho}
\newcommand{\bnu}{\boldsymbol \nu}

\newcommand{\balpha}{\boldsymbol \alpha}
\newcommand{\bbeta}{\boldsymbol \beta}

\newcommand{\cinew}{c_i^{*}}

\usepackage{amsthm}
\newtheorem{proposition}{Proposition}
\usepackage{amsmath,amssymb}
\usepackage{booktabs, multirow, xcolor}
\usepackage{iftex}
\ifPDFTeX
  \usepackage[T1]{fontenc}
  \usepackage[utf8]{inputenc}
  \usepackage{textcomp} 
\else 
  \usepackage{unicode-math}
  \defaultfontfeatures{Scale=MatchLowercase}
  \defaultfontfeatures[\rmfamily]{Ligatures=TeX,Scale=1}
\fi
\usepackage{lmodern}
\ifPDFTeX\else  
\fi
\IfFileExists{upquote.sty}{\usepackage{upquote}}{}
\IfFileExists{microtype.sty}{
  \usepackage[]{microtype}
  \UseMicrotypeSet[protrusion]{basicmath} 
}{}
\makeatletter
\@ifundefined{KOMAClassName}{
  \IfFileExists{parskip.sty}{%
    \usepackage{parskip}
  }{
    \setlength{\parindent}{0pt}
    \setlength{\parskip}{6pt plus 2pt minus 1pt}}
}{
  \KOMAoptions{parskip=half}}
\makeatother
\usepackage{xcolor}
\makeatletter
\ifx\paragraph\undefined\else
  \let\oldparagraph\paragraph
  \renewcommand{\paragraph}{
    \@ifstar
      \xxxParagraphStar
      \xxxParagraphNoStar
  }
  \newcommand{\xxxParagraphStar}[1]{\oldparagraph*{#1}\mbox{}}
  \newcommand{\xxxParagraphNoStar}[1]{\oldparagraph{#1}\mbox{}}
\fi
\ifx\subparagraph\undefined\else
  \let\oldsubparagraph\subparagraph
  \renewcommand{\subparagraph}{
    \@ifstar
      \xxxSubParagraphStar
      \xxxSubParagraphNoStar
  }
  \newcommand{\xxxSubParagraphStar}[1]{\oldsubparagraph*{#1}\mbox{}}
  \newcommand{\xxxSubParagraphNoStar}[1]{\oldsubparagraph{#1}\mbox{}}
\fi
\makeatother

\usepackage{longtable,booktabs,array}
\usepackage{calc} 
\usepackage{etoolbox}
\makeatletter
\patchcmd\longtable{\par}{\if@noskipsec\mbox{}\fi\par}{}{}
\makeatother
\IfFileExists{footnotehyper.sty}{\usepackage{footnotehyper}}{\usepackage{footnote}}
\makesavenoteenv{longtable}
\usepackage{graphicx}
\makeatletter
\def\maxwidth{\ifdim\Gin@nat@width>\linewidth\linewidth\else\Gin@nat@width\fi}
\def\maxheight{\ifdim\Gin@nat@height>\textheight\textheight\else\Gin@nat@height\fi}
\makeatother
\setkeys{Gin}{width=\maxwidth,height=\maxheight,keepaspectratio}
\makeatletter
\def\fps@figure{htbp}
\makeatother

\makeatletter
\@ifpackageloaded{caption}{}{\usepackage{caption}}
\AtBeginDocument{%
\ifdefined\contentsname
  \renewcommand*\contentsname{Table of contents}
\else
  \newcommand\contentsname{Table of contents}
\fi
\ifdefined\listfigurename
  \renewcommand*\listfigurename{List of Figures}
\else
  \newcommand\listfigurename{List of Figures}
\fi
\ifdefined\listtablename
  \renewcommand*\listtablename{List of Tables}
\else
  \newcommand\listtablename{List of Tables}
\fi
\ifdefined\figurename
  \renewcommand*\figurename{Figure}
\else
  \newcommand\figurename{Figure}
\fi
\ifdefined\tablename
  \renewcommand*\tablename{Table}
\else
  \newcommand\tablename{Table}
\fi
}
\@ifpackageloaded{float}{}{\usepackage{float}}
\floatstyle{ruled}
\@ifundefined{c@chapter}{\newfloat{codelisting}{h}{lop}}{\newfloat{codelisting}{h}{lop}[chapter]}
\floatname{codelisting}{Listing}

\makeatother
\makeatletter
\@ifpackageloaded{caption}{}{\usepackage{caption}}
\@ifpackageloaded{subcaption}{}{\usepackage{subcaption}}
\makeatother

\ifLuaTeX
  \usepackage{selnolig}  
\fi
\usepackage[]{natbib}
\usepackage{bookmark}

\IfFileExists{xurl.sty}{\usepackage{xurl}}{} 
\hypersetup{
  pdftitle={Bayesian inference for non-conjugate distance dependent Chinese restaurant process models},
  pdfauthor={Joseph Marsh; Theodore Kypraios; Rowland G Seymour},
  pdfkeywords={bayesian non-parameteric clustering},
  colorlinks=true,
  linkcolor={blue},
  filecolor={Maroon},
  citecolor={Blue},
  urlcolor={Blue},
  pdfcreator={LaTeX via pandoc}}

\newcommand{\anon}{1}

\begin{document}

\def\spacingset#1{\renewcommand{\baselinestretch}%
{#1}\small\normalsize} \spacingset{1}


\if1\anon
{
  \title{\bf Bayesian Inference for Non-Conjugate Distance Dependent Chinese Restaurant Process Models}
  \author{Joseph Marsh\\
    School of Mathematics, University of Birmingham\\
     \\
    Theodore Kypraios \\
    School of Mathematical Sciences, University of Nottingham \\
     \\
    Rowland G. Seymour \\
    School of Mathematics, University of Birmingham}
  \maketitle
} \fi

\if0\anon
{
  \bigskip
  \bigskip
  \bigskip
  \begin{center}
    {\LARGE\bf Bayesian Inference for Non-Conjugate Distance Dependent Chinese Restaurant Process Models}
\end{center}
  \medskip
} \fi

\bigskip
\begin{abstract}
The distance dependent Chinese Restaurant Process (ddCRP) provides a flexible prior distribution for clustering observations, incorporating covariate information through pairwise distances and accommodating a rich variety of cluster structures. When cluster parameters are conjugate to the likelihood, Bayesian inference is straightforward. In the non-conjugate setting, however, inference becomes substantially more challenging due to the trans-dimensional parameter spaces that arise as cluster assignments change. We develop a reversible jump Markov chain Monte Carlo (RJMCMC) framework to address this challenge, targeting the dimension-changing nature of cluster parameter vectors when observation assignments are updated. We introduce and compare several proposal strategies for birth and death moves, including prior-based, independence, and data-driven moment-matching proposals that target regions of high posterior density. For fixed-dimensional moves, we propose a posterior resampling strategy that improves acceptance rates while maintaining computational efficiency. Through a simulation study and an application to Old Faithful eruption durations, we demonstrate moment-matched proposals offer a principled, data-driven alternative to prior-based proposals. The resulting methodology provides a general RJMCMC framework for ddCRP models with non-conjugate likelihoods, demonstrated here on both discrete and continuous observation models.
\end{abstract}

\noindent%
{\it Keywords:} Bayesian nonparametric clustering; random partitions; reversible jump MCMC
\vfill

\newpage
\spacingset{1} 

\section{Introduction}

Many clustering problems arise in settings where observations carry spatial, temporal, or network structure, and where substantive knowledge suggests that nearby observations (in the broad sense) are more likely to belong to the same group. The distance-dependent Chinese Restaurant Process (ddCRP) of \citet{blei2011} provides a natural Bayesian nonparametric framework for such problems. Rather than constructing dependent random measures via the stick-breaking representation \citep{sethuraman1994, maceachern1999}, the ddCRP operates directly at the level of observation-to-observation assignments: each observation $i$ is linked to another observation $j$ with probability proportional to a decay function $f(d_{ij})$ of their pairwise distance, or to itself with probability proportional to a concentration parameter $\alpha$. The partition is then determined by the connected components of the resulting directed graph. The ddCRP reduces to the standard Chinese Restaurant Process (CRP) \citep{aldous1985, pitman1995} when all distances are equal, and has been extended to incorporate spatial contiguity constraints \citep{ghosh2011} and auxiliary covariate information \citep{li2013, li2016}. \citet{dahl2017} proposed a broader class of random partition distributions indexed by pairwise information in a related spirit.

When conjugacy holds between the cluster likelihood and its prior, inference for ddCRP models is straightforward: cluster-specific parameters can be analytically marginalised, yielding tractable full conditionals for a sequential Gibbs sampler that updates one observation assignment at a time \citep{blei2011}. The analogous conjugate machinery is well established for the exchangeable CRP, where marginal samplers based on the P\'{o}lya urn scheme \citep{blackwell1973, escobar1995, neal2000} and blocked Gibbs samplers based on truncated stick-breaking \citep{ishwaran2001} are standard tools. In many practically important settings, however, conjugacy does not hold: mixtures of gamma distributions with unknown shape parameters, mixtures of skew-normal distributions, and many other flexible component models all give rise to non-conjugate cluster likelihoods. For the exchangeable CRP, \citet{neal2000} addressed this by proposing auxiliary-variable algorithms and Metropolis--Hastings updates for allocation variables, and \citet{jain2004, jain2007} developed split-merge samplers that substantially improve mixing by simultaneously reallocating subsets of observations between clusters. These methods are designed specifically for the exchangeable CRP and do not transfer directly to the ddCRP, where reassigning a single observation link can create or destroy clusters and hence change the dimension of the parameter space, and where any reassignment must maintain the consistency of the underlying directed graph. At present, to our knowledge no general-purpose inference algorithm exists for non-conjugate ddCRP models.

The current state of inference methodology for Bayesian nonparametric mixture models can therefore be summarised as follows. For exchangeable CRP models, both conjugate and non-conjugate inference are well developed. For the ddCRP, inference has so far been restricted to the conjugate setting. The Gibbs sampler of \citet{blei2011} requires conjugacy by construction, since the full conditional for each assignment depends on analytically marginalising the cluster-specific parameters; the variational inference algorithm of \citet{bartunov2014}, developed for the sequential ddCRP, operates under the same restriction. The non-conjugate ddCRP remains an open problem, and this is the gap we address.

In this paper we develop a reversible jump Markov chain Monte Carlo (RJMCMC) framework \citep{green1995} for posterior inference in ddCRP models with non-conjugate cluster likelihoods. The key challenge is that each observation reassignment may alter the number of clusters and hence the dimension of the parameter space. We address this by formulating the assignment updates as trans-dimensional moves within the RJMCMC framework of \citet{green1995}, with dimension-matching achieved through explicit proposal distributions for the parameters of newly created clusters. We introduce a range of proposal strategies for the birth and death moves that arise when an assignment change creates or destroys a cluster, including prior-based proposals, independence proposals, and data-driven moment-matching proposals that target statistics of the observations whose cluster membership changes. We additionally develop strategies for fixed-dimensional moves in which the partition changes but the number of clusters does not, and we derive exact data-augmented updates for the ddCRP concentration hyperparameter. The framework is validated by comparison with the marginalised Gibbs sampler of \citet{blei2011} in a conjugate Poisson mixture setting, and applied to eruption durations from the Old Faithful geyser dataset using a non-conjugate gamma likelihood.

This paper is organised as follows. Section~2 reviews the ddCRP and its conjugate inference algorithm before presenting the RJMCMC framework, proposal strategies, and hyperparameter inference schemes. Section~3 presents the simulation study and real-data application. A brief discussion closes the paper.

\section{Materials and Methods}

We begin by reviewing the ddCRP and establishing notation for the general model class. We then describe exact inference for conjugate models via Gibbs sampling, before developing the RJMCMC framework for non-conjugate settings, including the proposal strategies and optional within-algorithm hyperparameter inference.

\subsection{The distance dependent Chinese Restaurant Process}

The distance dependent Chinese Restaurant Process (ddCRP) \citep{blei2011} defines a prior distribution over partitions of $n$ observations by means of customer-to-customer links that are informed by pairwise distances. Unlike the Chinese Restaurant Process (CRP), which generates exchangeable partitions, the ddCRP incorporates covariate information through a distance structure, encouraging nearby observations to cluster together.

Let $\by = (y_1, \ldots, y_n)$ denote a collection of $n$ observations and let $\mathbf{D} = (d_{ij})_{i,j=1}^n$ be a symmetric matrix of pairwise distances between the observations, where $d_{ij} \geq 0$ and $d_{ii} = 0$. We define a customer assignment vector $\mathbf{c} = (c_1, \ldots, c_n)$ where each $c_i \in \{1, \ldots, n\}$ represents the customer that customer $i$ links to. The links are drawn independently according to
$$
\Pr(c_i = j \mid \mathbf{D}, \alpha, f) \propto
\begin{cases}
f(d_{ij}) & \text{if } j \neq i, \\
\alpha & \text{if } j = i,
\end{cases}
$$
where $f$ is a non-increasing decay function and $\alpha > 0$ is a concentration parameter governing the propensity for self-links. A common choice is the exponential decay function $f(d) = \exp(-s \, d)$ with scale parameter $s > 0$, though any non-negative, non-increasing function is admissible.

Since the links are conditionally independent given the pairwise distances $\mathbf{D}$, the joint prior over the assignment vector factorises as
$$
\pi(\mathbf{c} \mid \mathbf{D}, \alpha, f) = \prod_{i=1}^n p(c_i \mid \mathbf{D}, \alpha, f).
$$
For notational convenience we denote $\eta=(\textbf{D}, \alpha, f, \bnu)$ to be the vector of hyperparameters associated with the ddCRP where $\bnu$ are any parameters necessary for the decay function $f$. The assignment vector $\mathbf{c}$ defines a directed graph on $\{1, \ldots, n\}$ in which each node $i$ has exactly one outgoing edge to $c_i$. Similar to the notation used in \cite{blei2011}, we define $z(\bc)$ to be the vector of table assignments induced by the customer assignments, obtained by identifying the connected components of this graph. Furthermore, define $|z(\bc)|$ to be the number of clusters, $z^k(\bc)$ to be the set of indices associated with cluster $k=1,...,|z(\bc)|$, and $z_i(\bc) \in \{1, \ldots, |z(\bc)|\}$ to be the cluster index of observation $i$, such that $i \in z^{z_i(\bc)}(\bc)$.

In the original CRP and ddCRP literature, data points are referred to as \textit{customers} and partition groups as \textit{tables}; these terms are in general used interchangeably. For clarity, we use \textit{observation} in place of customer and \textit{cluster} in place of table throughout the remainder of this article.

Conditional on the partition, observations $\by_{z^k(\bc)}$ in cluster $k$ are assumed to be independent and identically distributed with density $f(\by_{z^k(\bc)} \mid \bth, \brho_k)$ with respect to an appropriate base measure. The parameters are $\brho_k$, the vector of cluster specific parameters and $\bth$, the vector of global parameters shared across all clusters, which may be empty. Since observations are independent of one another, the density of the data is a product of cluster-specific contributions and is explicitly given by 
$$
f(\by \mid \bth, \brho, \bc) = \prod_{k=1}^{|z(\bc)|} f(\by_{z^k(\bc)} \mid \bth, \brho_k),
$$
where $\brho = (\brho_1, ..., \brho_{|z(\bc)|})$ is the collection of all cluster specific parameters.

By assigning independent priors $\pi(\bth)$ and $\pi(\brho_k \mid \bc)$ for $k=1,\ldots,|z(\bc)|$, the joint distribution factorises as
$$
\pi(\by, \bth, \brho, \bc, \eta) = f(\by \mid \bth, \brho, \bc)  \pi(\bth) \pi(\brho \mid \bc) \pi(\bc \mid \eta).
\label{eq:joint_distribution}
$$
Since the marginal likelihood is intractable, inference proceeds by targeting the unnormalised posterior, which by Bayes' theorem is proportional to the joint distribution above. Our primary interest is in the posterior over assignments $\bc$ and parameters $(\bth, \brho)$.

\subsection{Inference for conjugate models}

If the prior distribution on the cluster specific parameters is conjugate, we may marginalise out those parameters to obtain the marginal cluster likelihood. Explicitly, this target distribution is given by
\begin{align*}
    \pi(\bth, \bc \mid \by, \eta) &= \int \pi(\bth, \brho, \bc \mid \by, \eta) d \brho \\
    &= \pi(\bth) \pi(\bc \mid \eta) \prod_{k=1}^{|z(\bc)|} \int  f(\by_{z^k(\bc)} \mid \bth, \brho_k, \bc) \pi(\brho_k \mid \bc)  d\brho_k \\
    &= \pi(\bth) \pi(\bc \mid \eta) \prod_{k=1}^{|z(\bc)|}   f(\by_{z^k(\bc)} \mid \bth,  \bc),
\end{align*}
where $f(\by_{z^k(\bc)} \mid \bth,  \bc)$ is the marginal cluster likelihood obtained by integrating over all possible cluster parameters. Crucially, the dimension of the parameter space is fixed and now we may proceed to iteratively sample the global parameters and observation assignments.

Following \cite{blei2011}, the observation assignments can be updated via Gibbs sampling by sequentially sampling each $c_i$ from its full conditional distribution. For observation $i$, temporarily remove $i$'s link by setting $c_i = i$ to obtain the modified partition $z(\bc_{-i})$.

For observation $i$, we consider the change in the partition resulting from a new assignment denoted by $\cinew$. We first define some notation to describe the various cluster structures. Let $O = \mathcal{T}_i^{z(\textbf{c})}$ denote the original cluster, that is the set of indices associated with observation $i$ under the cluster assignments $z(\bc)$. Define the moving set $M= \mathcal{T}_i^{z(\bc_{-i})}$ to be the set of indices moving clusters with observation $i$, $R = O \setminus M$ to be the remaining set which is the indices not linked with $i$ and the target set $T = \mathcal{T}_i^{z(\bc_{-i} \cup \cinew)}$ to be the set of indices associated with observation $i$ in the new configuration $z(\bc_{-i} \cup \cinew)$.

The full conditional for a new observation assignment is therefore proportional to
\[
\pi(\cinew \mid \bth, \bc_{-i},  \by, \eta) \propto \begin{cases} 
      \alpha & \cinew = i \\
      f(d_{ij}) & \cinew \in \mathcal{T}_i^{z(\textbf{c})} \setminus \{i\} \\
      \frac{f(\by_{M \cup T} \mid \bth,  \bc)}{f(\by_{M} \mid \bth,  \bc) f(\by_{T} \mid \bth,  \bc)} & \text{otherwise}.
   \end{cases}
\]

The full conditional is available in closed form because the marginal cluster likelihoods are analytically tractable. Global parameters $\bth$, if present, are updated in separate Gibbs or Metropolis-Hastings steps conditional on the current partition.

\subsection{Inference for non-conjugate models}

When $\pi(\brho_k \mid \bc)$ is not conjugate to $f(\by_{z^k(\bc)} \mid \bth, \brho_k, \bc)$, the marginal cluster likelihood $f(\by_{z^k(\bc)} \mid \bth,  \bc)$ is not available in closed form and the cluster parameters $\brho = (\brho_1,...,\brho_{|z(\bc)|})$ must be retained in the state and sampled explicitly. The central difficulty is that reassigning an observation's link can change the number of clusters $K$, thereby creating or destroying cluster parameters and altering the dimension of the parameter space. This is a trans-dimensional inference problem, which we address using reversible jump Markov chain Monte Carlo (RJMCMC) \citep{green1995}.

We briefly review the RJMCMC framework. Consider two nested models $\mathcal{M}$ and $\mathcal{M}'$ with parameter vectors of possibly different dimensions. A move from $\mathcal{M}$ to $\mathcal{M}'$ proceeds by augmenting the current parameter vector with auxiliary random variables $\mathbf{u} \sim q(\mathbf{u})$ such that a deterministic bijection $g$ maps the augmented space to the proposed parameter space. Let $(\bth, \brho, \bc)$ and $(\bth, \brho', \bc')$ denote the current and proposed state of the Markov chain respectively, the acceptance probability for the proposed move is
$$
p_\text{accept} = \min\left(1, \frac{\pi(\bth, \brho', \bc' \mid \by, \eta)}{\pi(\bth, \brho, \bc \mid \by, \eta)} \cdot \frac{r(\mathcal{M}' \to \mathcal{M})}{r(\mathcal{M} \to \mathcal{M}')} \cdot \frac{q(\textbf{u}')}{q(\mathbf{u})} \cdot \left|\det \frac{\partial g (\boldsymbol{\rho}', \mathbf{u}')}{\partial (\boldsymbol{\rho}, \mathbf{u})}\right|\right),
$$
where $r(\mathcal{M} \to \mathcal{M}')$ is the probability of proposing a move from $\mathcal{M}$ to $\mathcal{M}'$, and the final term is the Jacobian determinant of the bijection $g$.

\subsection{RJMCMC for ddCRP models}

To update observation $i$'s assignment, we recall the moving set $M$, defined as the connected component containing $i$ when $i$'s outgoing link is temporarily removed. We propose a new configuration $\mathcal{M}'$ by sampling a new assignment $c_i^*$ for observation $i$. Two proposal strategies are

\paragraph*{Uniform}
We sample $j^* \sim \text{Uniform}(\{1,...,n\})$ and set $\cinew = j^*$. Since the probability of choosing any specific link is $1/n$ in both directions, the proposal distributions are symmetric:
\[
r(\mathcal{M} \to \mathcal{M}') = r(\mathcal{M}' \to \mathcal{M}) = 1,
\]
consequently the proposal ratio in the acceptance probability simplifies to 1.

\paragraph*{Prior proposal}
We sample a candidate link $j^*$ proportional to the ddCRP prior, i.e., with probability $\Pr(c_i=j^* \mid D, \alpha, f)$. In this case, $r(\mathcal{M} \to \mathcal{M}') = \Pr(c_i = j \mid D, \alpha, f)$ and the reverse move probability is $r(\mathcal{M}' \to \mathcal{M}) = \Pr(c_i = j^* \mid D, \alpha, f)$, where j is the current link. The proposal ratio becomes
\[
\frac{r(\mathcal{M}' \to \mathcal{M})}{r(\mathcal{M} \to \mathcal{M}')} = \frac{\Pr(c_i = j \mid D, \alpha, f)}{\Pr(c_i = j^* \mid D, \alpha, f)}.
\]
Note that this ratio is the inverse of the prior ratio found in the target term $\pi(\bth, \brho', \bc' \mid \by, \eta)/\pi(\bth, \brho, \bc \mid \by, \eta)$. Therefore, when using a prior proposal, the prior terms cancel out, and the acceptance probability depends primarily on the likelihood ratio.

When the proposed link $\cinew$ alters the number of connected components in the graph, we employ the reversible jump framework to traverse spaces of differing dimensionality. Let $K=|z(\bc)|$ denote the number of clusters under the current configuration, the proposed configuration may result in either an increase, decrease or the same number of clusters.

A birth move occurs when the new link $\cinew$ creates a cycle within the moving set $M$, isolating it as a new cluster. The state space expands from $\brho \in \Theta_K$ to $\brho'\in \Theta_{K+1}$. To match dimensions, we introduce auxiliary variables $\textbf{u} \sim q(\textbf{u})$ and set $(\brho', \textbf{u}') = g(\brho, \textbf{u})$.

The bijection $g$ is constructed component-wise. Writing $\brho_{\text{new}} = (\theta^{(1)}, \ldots, \theta^{(d)})$ to denote the $d$ scalar elements of the new cluster parameter vector, we draw independent auxiliary variables $u^{(p)} \sim q^{(p)}$ for $p = 1, \ldots, d$ and set $\theta_{\text{new}}^{(p)} = u^{(p)}$. Because the $d$ component bijections are independent, the Jacobian of the full mapping is block-diagonal and its determinant factorises as
$$
|\det J| = \prod_{p=1}^d |\det J^{(p)}|.
$$
For components proposed directly on their natural scale, the bijection is a variable relabelling with $|\det J^{(p)}| = 1$. When a component is instead proposed on a transformed scale, the change of variables introduces a non-trivial Jacobian factor; this arises for the log-normal proposal and is described in Section~\ref{sec:proposals}. Invertibility of $g$ is immediate, in the corresponding death move, the parameters of the cluster being destroyed serve as the reverse auxiliary variables $\mathbf{u}'$, and $g^{-1}$ is the same variable relabelling applied in the opposite direction.

The reverse move, which we define as a death move, occurs when the proposed link $\cinew$ connects the moving set $M$ (which was previously an isolated cluster) to a different component, merging two clusters into one. In this case the number of clusters reduces by one and set $(\brho, \textbf{u}) = g^{-1}(\brho',\textbf{u}')$. 

Finally, the case where the number of clusters do not change is defined as a fixed-dimensional move, which may arise in one of two cases. The first is if the proposed link $\cinew$ is already in the existing cluster, hence the moving set is equal to the target set and we have $M=T$. Assuming a symmetric proposal distribution for the configuration, in this case the acceptance probability depends only on the posterior ratio and is given by
$$
p_\text{accept} = \min\left(1, \frac{\pi(\bth, \brho', \bc' \mid \by, \eta)}{\pi(\bth, \brho, \bc \mid \by, \eta)} \right).
$$
Alternatively, we have the case where the partition changes but the number of clusters remains the same, which occurs when there are observations remaining in the original cluster. We may retain the original parameters or optionally propose new parameters designed to target regions of high posterior density. If the move also involves perturbing the cluster parameters (e.g., to explore the local mode of the new configuration), the acceptance ratio must be augmented to include the parameter proposal ratio and the likelihood terms for the new parameters.

\subsection{Proposal strategies} \label{sec:proposals}

Although detailed balance guarantees the theoretical validity of these moves, the practical performance of the algorithm depends on the proposal distribution. Proposals that place negligible mass near the posterior mode will yield low acceptance rates regardless of the asymptotic guarantees, and in trans-dimensional settings this can prevent the sampler from visiting configurations with different numbers of clusters altogether. We therefore consider a range of data-driven strategies for both birth and death moves and for fixed-dimensional parameter updates.

\subsubsection{Birth and death proposals}

For birth moves, we must propose new cluster parameters $\brho_{\text{new}}$ for the newly created cluster containing the moving set $M$. The proposal density $q(\brho_{\text{new}})$ appears in the acceptance ratio and directly influences mixing efficiency.

\paragraph{Independence proposal}
A straightforward approach is to use an independence sampler, where the new parameters are drawn from a fixed distribution $Q$ that does not depend on the current state:
\[
\brho_{\text{new}} \sim Q.
\]
The simplest choice for $Q$ is the prior distribution itself, where the proposal density is $q(\brho_{\text{new}}) = \pi(\brho_{\text{new}} \mid \bc)$. This approach is parameter-free and leads to convenient partial cancellations in the acceptance ratio. More generally, $Q$ can be a custom fixed distribution, with density $q(\brho_{\text{new}}) = f_Q(\brho_{\text{new}})$, tailored via domain knowledge or preliminary analysis to target specific regions of the parameter space.  In both cases the bijection sets $\theta^{(p)}_{\text{new}} = u^{(p)}$ for each scalar component, so $|\det J| = 1$ and the Jacobian term does not contribute to the acceptance probability.

However, regardless of the specific choice of $Q$, independence samplers do not use the information in the moving set $M$. When the prior is diffuse relative to the likelihood, proposed values will frequently fall far from regions of high posterior support, yielding low acceptance rates that are difficult to improve by tuning alone. This motivates data-driven proposals centred on summaries of $\by_M$, which we describe next.

\paragraph{Moment matching}
To exploit information in the data while maintaining computational efficiency, we employ moment-matching strategies that adaptively centre the proposal distribution around summary statistics of $\by_{M}$. Let $\hat{\brho}_{M}$ denote a method-of-moments estimator computed from the observations in the moving set. We consider three variants:

\textit{Normal moment matching.} For parameters with support on $\mathbb{R}$, we use a normal proposal:
\[
\brho_{\text{new}} \sim \text{Normal}(\hat{\brho}_{M}, \sigma^2),
\]
where $\sigma > 0$ is a tuning parameter controlling the dispersion around the moment estimate. The proposal is centred at a data-driven location but allows exploration via the specified variance. The bijection sets $\theta^{(p)}_{\text{new}} = u^{(p)}$ directly, so $|\det J^{(p)}| = 1$ for each component.

\textit{Inverse gamma moment matching.} For scale-type parameters, we fit an inverse gamma distribution via method of moments. Given data $\by_{M}$ with sample mean $\bar{y}$ and sample variance $s^2$, the method-of-moments estimators are
\[
\hat{\alpha} = 2 + \frac{\bar{y}^2}{s^2}, \quad \hat{\beta} = \bar{y}(\hat{\alpha} - 1),
\]
yielding the proposal $\brho_{\text{new}} \sim \text{InverseGamma}(\hat{\alpha}, \hat{\beta})$, provided $\hat{\alpha} > 2$ and $\hat{\beta} > 0$. If moment matching fails due to insufficient data or numerical issues, we can simply fall back to the prior proposal. As with the normal proposal, the bijection is a direct assignment with $|\det J^{(p)}| = 1$ for each component.

\textit{Log-normal moment matching.} For strictly positive parameters, the auxiliary variable for each scalar component is $u^{(p)} = \log \theta^{(p)}_{\text{new}}$, drawn as
\[
u^{(p)} \sim \text{Normal}(\log \hat{\brho}^{(p)}_{M}, \sigma^2),
\]
where $\hat{\brho}^{(p)}_{M}$ is the method-of-moments estimate for component $p$ computed from $\by_M$. The bijection $\theta^{(p)}_{\text{new}} = \exp(u^{(p)})$ maps the log-scale auxiliary to the original scale, introducing a Jacobian factor $|\partial \theta^{(p)}_{\text{new}} / \partial u^{(p)}| = \theta^{(p)}_{\text{new}}$ per component that must be included in the acceptance ratio. This proposal is particularly effective for parameters spanning multiple orders of magnitude, as it proposes symmetrically on the log scale while respecting the positivity constraint.

In all moment-matching proposals, if the moving set $M$ contains too few observations to reliably estimate moments (typically fewer than 2--3 observations), we default to the prior proposal to avoid unstable estimates.

\subsubsection{Fixed-dimensional updates}

When the proposed link $\cinew$ connects the moving set $M$ to a different existing cluster without changing the total number of clusters $K$, we have a fixed-dimensional move. In this case, the moving set $M$ transfers from its current cluster ($R$) to the target cluster ($T$). The cluster parameters may either remain unchanged or be updated to reflect the new configuration.

\paragraph*{No parameter update}
The simplest strategy maintains the existing cluster parameters:
\[
\brho_{R}' = \brho_{R}, \quad \brho_{T}' = \brho_{T},
\]
where $\brho_{R}'$ and $\brho_{T}'$ denote the parameters for the remaining and target clusters, respectively, after the move. Since the proposal is deterministic, the proposal ratio is unity and the acceptance probability depends solely on the posterior ratio:
\[
p_\text{accept} = \min\left(1, \frac{\pi(\bth, \brho', \bc' \mid \by, \eta)}{\pi(\bth, \brho, \bc \mid \by, \eta)}\right).
\]
This approach is computationally inexpensive but may result in poor likelihood values if the parameters are mismatched to the new cluster composition.

\paragraph*{Resampling from posterior}
To improve acceptance rates, we may resample cluster parameters from approximate conditional posteriors given the new cluster memberships. For each affected cluster $k \in \{R, T\}$, let $\by_k'$ denote the observations assigned to cluster $k$ in the proposed configuration. We then sample new parameters using moment-matched distributions as described previously. The proposal density for the forward move is $q(\brho_{R}', \brho_{T}' \mid \bc')$, and the reverse density is $q(\brho_{R}, \brho_{T} \mid \bc)$, yielding a Hastings ratio
\[
\frac{q(\brho_{R}, \brho_{T} \mid \bc)}{q(\brho_{R}', \brho_{T}' \mid \bc')}.
\]
Since the new parameters are proposed directly on their natural scale, the bijection for each scalar component is a direct assignment with $|\det J^{(p)}| = 1$, so no additional Jacobian term appears in the acceptance ratio. This strategy adapts the parameters to the new cluster composition, often improving acceptance rates at the cost of additional sampling operations if the proposal distribution is suitably chosen.

\subsection{Inference for ddCRP hyperparameters}
\label{sec:hyperparams}

The ddCRP prior with an exponential distance decay function introduces two hyperparameters: the concentration parameter $\alpha > 0$, which governs the propensity for self-links and hence the prior number of clusters, and the decay scale $s > 0$, which determines how rapidly co-clustering probability falls off with distance. In many applications these are fixed by the analyst, for instance by setting $s$ to a value that reflects the expected spatial or covariate scale of the clustering structure. It is, however, possible to infer $\alpha$ and $s$ within the MCMC algorithm by treating them as additional unknowns with prior distributions. We describe exact and approximate updates for each parameter in turn, before discussing the practical difficulties that arise when both are inferred jointly.

\subsubsection*{Inference for $\alpha$}

To infer the ddCRP concentration parameter $\alpha$, we may sample from its conditional posterior. Given a Gamma prior $\alpha \sim \text{Gamma}(a_\alpha, b_\alpha)$, the conditional posterior is given by:
\[
p(\alpha \mid \bc, s) \propto \alpha^{a_{\alpha} +n_{\text{self}}-1} \exp(-b_\alpha \alpha) \prod_{i=1}^n \frac{1}{\alpha + R_i(s)}.
\]
The product term $\prod_{i=1}^n \frac{1}{\alpha + R_i(s)}$ arising from the ddCRP normalisation prevents a standard conjugate update. To bypass this intractable denominator, we exploit a data-augmentation strategy, recognising that each fractional term is the normalising constant of an Exponential distribution.

\begin{proposition}
    Let $\bc=(c_1,...,c_n)$ be the observation assignments that contain exactly $n_{\text{self}}$ self-links, and let $R_i(s) = \sum_{j \neq i} f(d_{ij})$ by the total decay weight for observation $i$. The exact posterior distribution $p(\alpha \mid \bc, s)$ can be targeted by introducing conditionally independent auxiliary variables $V_1,\ldots,V_n$ with the following exact Gibbs updates:
    \begin{enumerate}
        \item $V_i \mid \alpha, \bc, s \sim \text{Exponential}(\alpha + R_i(s))$ for $i=1,\ldots,n$
        \item $\alpha \mid V, \bc, s \sim \text{Gamma}(a_\alpha + n_{\text{self}}, b_\alpha + \sum_{i=1}^n V_i)$ 
    \end{enumerate}
\end{proposition}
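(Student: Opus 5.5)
The plan is to construct an explicit joint density for $(\alpha, V_1,\ldots,V_n)$ whose $\alpha$-marginal is the conditional posterior $p(\alpha\mid\bc,s)$ displayed above. Once that is done, the two updates in the proposition are shown to be its exact full conditionals, so the resulting two-block Gibbs sampler leaves the joint, and hence the $\alpha$-marginal, invariant.

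First, write down the conditional posterior. Under the normalised ddCRP prior, $\Pr(c_i = j\mid \mathbf{D},\alpha,f)$ equals $\alpha/(\alpha+R_i(s))$ if $j=i$ and $f(d_{ij})/(\alpha+R_i(s))$ otherwise. Taking the product over $i$ and multiplying by the Gamma$(a_\alpha,b_\alpha)$ prior gives
\[
p(\alpha\mid\bc,s)\propto \alpha^{a_\alpha+n_{\text{self}}-1}e^{-b_\alpha\alpha}\prod_{i=1}^n \frac{1}{\alpha+R_i(s)}.
\]
The factors $f(d_{ic_i})$ for non-self-links do not involve $\alpha$ and are absorbed into the constant. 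This recovers the display preceding the proposition.

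Second, introduce the augmentation. Use the identity
\[
\frac{1}{\alpha+R_i(s)}=\int_0^\infty e^{-(\alpha+R_i(s))v_i}\,dv_i,
\]
which is valid because $\alpha>0$ and $R_i(s)\ge 0$. Define the joint density on $(0,\infty)^{n+1}$ by
\[
p(\alpha,v\mid\bc,s)\propto \alpha^{a_\alpha+n_{\text{self}}-1}e^{-b_\alpha\alpha}\prod_{i=1}^n e^{-(\alpha+R_i(s))v_i}.
\]
By Tonelli's theorem, integrating out $v$ returns exactly the marginal above. The total integral is finite, since the $\alpha$-marginal is a Gamma kernel times a factor bounded by $\prod_i R_i(s)^{-1}$ (or integrable regardless when $a_\alpha>0$). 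So the joint is a proper density with the correct marginal.

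Third, read off the full conditionals.
\begin{itemize}
\item Fixing $\alpha$, the joint factorises over $i$ into kernels $e^{-(\alpha+R_i(s))v_i}$. Hence the $V_i$ are conditionally independent with $V_i \sim \text{Exponential}(\alpha+R_i(s))$, in the rate parametrisation. This is update 1.
\item Fixing $v$ and dropping the factors $e^{-R_i(s)v_i}$, which do not involve $\alpha$, leaves $\alpha^{a_\alpha+n_{\text{self}}-1}\exp\{-(b_\alpha+\sum_i v_i)\alpha\}$. This is the Gamma$(a_\alpha+n_{\text{self}},\, b_\alpha+\sum_i V_i)$ kernel of update 2.
\end{itemize}

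Finally, each Gibbs step preserves the joint distribution, so the $\alpha$-marginal of the chain's invariant law is $p(\alpha\mid\bc,s)$. Irreducibility holds because both conditionals have full support on the positive reals. Within the larger sampler, $\bc$ and $s$ are held fixed during this block, so standard composition arguments apply.

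There is no substantive obstacle here. The step needing most care is the bookkeeping of the ddCRP normalisation: one must check that every observation, not only the self-linked ones, contributes a factor $(\alpha+R_i(s))^{-1}$. That is why all $n$ auxiliary variables are needed, while only $n_{\text{self}}$ powers of $\alpha$ appear. Properness and the interchange of integrals are immediate from positivity.
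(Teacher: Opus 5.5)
Your argument is correct and is essentially the paper's. Defining the augmented density $\alpha^{a_\alpha+n_{\text{self}}-1}e^{-b_\alpha\alpha}\prod_{i=1}^n e^{-(\alpha+R_i(s))v_i}$ through $(\alpha+R_i(s))^{-1}=\int_0^\infty e^{-(\alpha+R_i(s))v}\,dv$ is the same as the paper's device of multiplying $p(\alpha\mid\bc,s)$ by Exponential$(\alpha+R_i(s))$ densities, whose normalising factors cancel the ddCRP denominators; the two full conditionals then follow as you read them off, and your properness check is fine, though more directly $\pi(\alpha)\Pr(\bc\mid\alpha,s)\le\pi(\alpha)$.
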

\begin{proof}
The full derivation, verifying that marginalising over the auxiliary variables $V$ strictly recovers the original target posterior $p(\alpha \mid \bc, s)$, is provided in Appendix 1.
\end{proof}

\subsubsection*{Inference for $s$ via Metropolis--Hastings}

Since $R_i(s)$ is nonlinear in $s$, no known conjugate update for the decay scale exists. We instead use a log-normal random-walk Metropolis--Hastings step. Given the current value $s$, we propose
$$
s' = s \exp(\varepsilon), \qquad \varepsilon \sim \mathrm{Normal}(0, \sigma_s^2),
$$
where $\sigma_s^2$ is a tuning parameter. The Jacobian of the log-scale proposal introduces an additional factor of $s'/s$ in the acceptance ratio. When $\alpha$ is being inferred simultaneously, the auxiliary variables $V_i$ sampled in the $\alpha$ update can be reused. Specifically, the log acceptance ratio for the augmented target is
$$
\log r = a_s \log\frac{s'}{s} - (b_s + D_{\mathrm{sum}})(s' - s) - \sum_{i=1}^n V_i\!\left[R_i(s') - R_i(s)\right],
$$
where $D_{\mathrm{sum}} = \sum_{i:\,c_i \neq i} d_{i,c_i}$ is the sum of distances over non-self links and $b_s$ is the rate of the Gamma prior $s \sim \mathrm{Gamma}(a_s, b_s)$. When $\alpha$ is held fixed and only $s$ is inferred, the auxiliary variables are not available and the log acceptance ratio reverts to
$$
\log r = a_s \log\frac{s'}{s} - (b_s + D_{\mathrm{sum}})(s' - s) - \sum_{i=1}^n \log\frac{\alpha + R_i(s')}{\alpha + R_i(s)}.
$$
In either case, the proposed $s'$ is accepted with probability $\min(1, e^{\log r})$.

\subsubsection*{Practical considerations}

Although the updates above are theoretically exact, jointly inferring $\alpha$ and $s$ is subject to a weak identifiability problem that can seriously impede mixing in practice. Both parameters enter the ddCRP likelihood exclusively through the normalising constants $\alpha + R_i(s)$, where $R_i(s) = \sum_{j \neq i} \exp(-s\,d_{ij})$. When the distance scale $s$ is small relative to the spread of pairwise distances, the decay surface $e^{-sd}$ is nearly flat and $R_i(s)$ varies little across observations. In this scenario, an increase in $s$ and a compensating increase in $\alpha$ can yield almost identical normalised assignment probabilities, so the joint posterior over $(\alpha, s)$ develops a ridge of near-equal density along which the individual parameters are not separately identified by the data. This is a manifestation of parameter confounding: $\alpha$ and $s$ are weakly identifiable when the clustering signal is diffuse, in the sense that multiple $(\alpha, s)$ pairs are consistent with the observed partition up to the precision of the likelihood.

The practical consequence is that the Metropolis--Hastings step for $s$ will mix slowly along this ridge regardless of the choice of $\sigma_s^2$, since proposals that move across the ridge are penalised by the likelihood while proposals that remain on it are essentially undirected. This difficulty is not primarily a tuning problem and cannot be resolved by adjusting the proposal variance alone. Mixing improves only when the data strongly constrain the distance scale, that is when cluster boundaries are well-defined and the partition is sensitive to the value of $s$ across the observed distance range.

In practice, fixing $s$ at a value informed by the expected covariate scale of the clustering structure and inferring only $\alpha$ is often the more stable approach. The data-augmented Gibbs step for $\alpha$ is exact and requires no tuning, and the identifiability issue does not arise for $\alpha$ alone since its conditional posterior given $s$ and $\bc$ is a standard Gamma distribution. When $s$ is inferred, we recommend verifying that the marginal posterior for $s$ differs materially from the prior before treating the inferred value as informative, since a posterior that closely tracks the prior is consistent with weak identifiability rather than genuine data-driven constraint.

\section{Results}

We evaluate the proposed inference methodology through two analyses. We first present a simulation study using Poisson data, which admits a conjugate formulation and allows sampler correctness to be verified by direct comparison with the marginalised Gibbs baseline. We then apply the methodology to the Old Faithful geyser dataset, which requires a non-conjugate gamma likelihood and illustrates the method in a practically relevant continuous-data setting.

\subsection{Simulation study}

In this section we present a simulation study where the data are clustered and within each group the data are generated from a Poisson distribution with cluster specific rates $\lambda_k$ for $k=1,..., K^*$ where $K^*$ is the maximum number simulated of clusters.

The Poisson observation model provides a convenient setting in which to verify the correctness of the proposed RJMCMC samplers. Since the Gamma distribution is conjugate to the Poisson likelihood, the cluster-specific rate parameters $\lambda_k$ can be marginalised out analytically, yielding an exact collapsed Gibbs sampler that targets the same posterior as the non-conjugate RJMCMC approaches. We exploit this by running a marginalised Gibbs baseline alongside a suite of RJMCMC samplers on the same simulated data, and verifying that the resulting posterior distributions are in agreement.

\subsubsection{Data generation and model description}

We simulate $n = 150$ observations from a Poisson clustering model with $K^* = 3$ true clusters of equal size. Covariates are drawn from a Gaussian mixture with centres $\mu_1 = -3$, $\mu_2 = 0$, $\mu_3 = 3$ and common standard deviation $\sigma = 1.5$, so that the clusters partially overlap in covariate space. Specifically, we set
$$
x_i \mid z_i = k \;\sim\; \mathrm{Normal}(\mu_k,\, \sigma^2),
$$
where $z_i \in \{1, 2, 3\}$ is the true cluster label. The ddCRP distance matrix is constructed from absolute pairwise covariate differences, $D_{ij} = |x_i - x_j|$, with an exponential decay kernel $f(d) = \exp(-s\,d)$ and fixed $s = 0.5$. We focus on an overlapping scenario in which the true cluster rates are $\blam = (1, 4, 7)$. The moderate separation between rates, combined with the covariate overlap, makes this a nontrivial inference setting. Observations are drawn as
$$
y_i \mid z_i = k \;\sim\; \mathrm{Poisson}(\lambda_k).
$$
Figure~\ref{fig:pois_data} shows the distribution of counts within each true cluster and the scatter of covariate against count coloured by true cluster membership.

\begin{figure}[htbp]
\centering
\includegraphics[width=\textwidth]{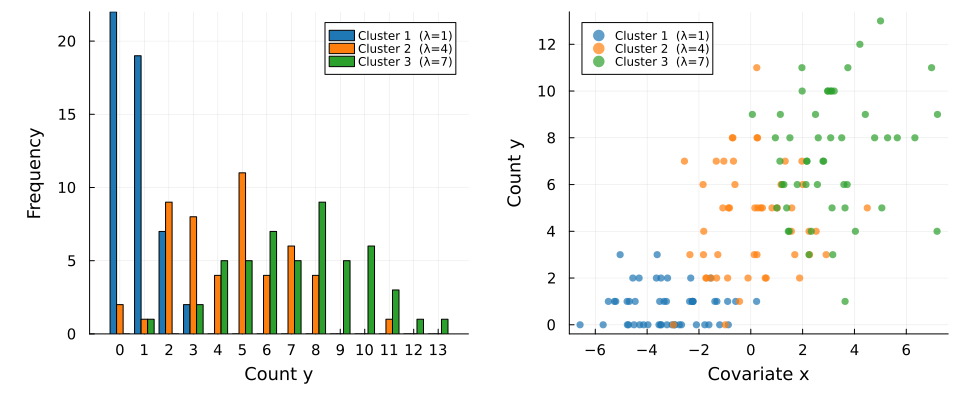}
\caption{Simulated Poisson data for the overlapping scenario ($n = 150$, $K^* = 3$,
  $\blam = (1, 4, 7)$). Left: histogram of counts stratified by true cluster membership.
  Right: scatter of covariate $x$ against count $y$, coloured by true cluster.}
\label{fig:pois_data}
\end{figure}

Recall that $\by_{z^k(\bc)}$ denotes the vector of observations in cluster $k$ and $\by$ is the complete vector of observations. Conditional on the cluster assignments, which are induced by the ddCRP prior and observation assignment vector $\bc$, the joint probability mass function of the data is
$$
f(\by \mid \blam) = \prod_{k=1}^{|z(\bc)|} f(\by_{z^k(\bc)} \mid \lambda_k),
$$
where $f(\by_{z^k(\bc)} \mid \lambda_k)$ is the joint probability mass function of observations in cluster $k$ and explicitly is given by
$$
f(\by_{z^k(\bc)} \mid \lambda_k) = \prod_{i \in z^k(\bc)} \frac{\lambda_k^{y_i}e^{-\lambda_k}}{y_i!},
$$
which follows from the fact that observations within a cluster are independent of one another.

By assigning independent conjugate priors $\lambda_k \sim \text{Gamma}(a,b)$ and the ddCRP prior, the posterior is
\begin{align*}
    \pi(\blam, \bc \mid \by) & \propto f(\by \mid \blam) \pi(\blam) \pi(\bc \mid \eta) \\
    & = \pi(\bc \mid \eta) \prod_{k=1}^{|z(\bc)|} \left( \prod_{i \in z^k(\bc)} \frac{\lambda_k^{y_i} e^{- \lambda_k}}{y_i!} \lambda_k^{a-1} e^{-b \lambda_k}  \right) \\
    &= \pi(\bc \mid \eta) \prod_{k=1}^{|z(\bc)|} \left(  \lambda_k^{S_k + a-1} e^{-\lambda_k (n_k  + b)} \prod_{i \in z^k(\bc)} \frac{1}{y_i!}  \right),
\end{align*}
where $S_k$ and $n_k$ are the sum and number of observations in cluster $k$ respectively and $\eta = (\textbf{D}, \alpha, f, s)$ is the vector of ddCRP hyperparameters consisting of pairwise distances $\textbf{D}$, self link probability $\alpha$, distance decay function $f$ and the rate of decay $s$. We aim to explicitly compare the reversible jump methods introduced in this article with the standard conjugate implementation described in \cite{blei2011}, therefore for completeness we write the marginalised posterior:
\begin{align*}
    \pi(\bc \mid \by) &= \int \pi(\blam, \bc \mid \by) d\blam \\
    &= \pi(\bc \mid \eta) \prod_{k=1}^{|z(\bc)|} \left( \int  \lambda_k^{S_k + a-1} e^{-\lambda_k (n_k  + b)} d \lambda_k \prod_{i \in z^k(\bc)} \frac{1}{y_i!}  \right) \\
    &= \pi(\bc \mid \eta) \prod_{k=1}^{|z(\bc)|} \left( \frac{\Gamma(S_k + a)}{(n_k + b)^{S_k + a}} \prod_{i \in z^k(\bc)} \frac{1}{y_i!}  \right).
\end{align*}

\subsubsection{Inference}

We compare seven sampler configurations. The marginalised Gibbs baseline analytically integrates out $\lambda_k$ and serves as the reference for correctness verification. The remaining six configurations use the non-conjugate model, in which the cluster rates are maintained as explicit sampler parameters. Birth proposals are drawn from one of three families: the prior, a normal moment-matched proposal (NMM), and a log-normal moment-matched proposal LNMM). Each birth proposal is crossed with two fixed-dimensional update strategies: No Update, in which cluster rates are left unchanged during reassignment moves, and Resample, in which the affected cluster rates are resampled from a moment-matched approximate posterior after each reassignment. For each of the proposed inference strategies we assign independent uninformative Gamma(1, 0.1) priors to the rate parameters $\lambda_k$.

To select the birth proposal standard deviation $\sigma_b$ for the NMM and LNMM families, we conduct a grid search using shorter preliminary chains of 50{,}000 post-burn-in samples. For No Update configurations we sweep $\sigma_b \in \{0.05, 0.10, 0.25, 0.50, 1, 2\}$; for Resample configurations we search a two-dimensional grid $\sigma_b \times \sigma_r$ over the same set of values, where $\sigma_r$ is the standard deviation of the resampling proposal. The tuning criterion is the expected squared jumping distance of $K$ (ESJD$(K)$), defined as the mean of $\{(K^{(t+1)} - K^{(t)})^2\}$ over thinned post-burn-in samples, which rewards chains that move frequently between distinct cluster configurations \citep{esjd-gelman-paper}. The best-performing parameter combination from each group is then used in the final full-length runs of 100{,}000 post-burn-in iterations.

\subsubsection{Results}

Table~\ref{tab:pois_overlapping} summarises the performance of each method on the simulated data in terms of the posterior mean and mode of $K$, $P(K{=}K^*)$, and ESS$(K)$. All methods recover the correct mode $K^* = 3$ and produce comparable posterior estimates of $\bar{K}$, and $P(K{=}K^*)$. The marginalised Gibbs sampler achieves the highest ESS$(K)$ by a substantial margin, as expected given its collapsed representation; however, this advantage comes at the cost of restricting inference to the conjugate Poisson model. Among the RJMCMC samplers, the prior, NMM, and LNMM configurations achieve broadly comparable mixing efficiency, with No Update configurations tending to yield slightly higher ESS$(K)$ than their Resample counterparts for this model.

\begin{table}[htbp]
\centering
\caption{Comparison of sampler configurations for the Poisson simulated data ($\lambda = (1, 4, 7)$, $n = 150$, $K^* = 3$). Results are computed over 20{,}000 thinned post-burn-in samples.}
\label{tab:pois_overlapping}

\begin{tabular}{l c c r r r r}
  \toprule
  Birth & Fixed-dim & $\bar{K}$ & Mode $K$ & $P(K{=}K^*)$ & ESS$(K)$ & Birth acc (\%) \\
  \midrule
  Gibbs & none & 4.44 & 3 & 0.246 & 16030 & — \\
  \midrule
  Prior & none & 4.50 & 3 & 0.239 & 3720 & 19.5 \\
  Prior & resample & 4.50 & 3 & 0.245 & 3655 & 19.4 \\
  \midrule
  NMM & none & 4.46 & 3 & 0.241 & 2648 & 15.7 \\
  NMM & resample & 4.40 & 3 & 0.245 & 3166 & 15.8 \\
  \midrule
  LNMM & none & 4.45 & 3 & 0.240 & 3702 & 16.2 \\
  LNMM & resample & 4.44 & 3 & 0.244 & 2967 & 17.4 \\
  \bottomrule
\end{tabular}

\end{table}

Figure~\ref{fig:pois_traces} displays trace plots for the number of clusters $K$ and the ddCRP concentration parameter $\alpha$ over the post-burn-in chain. The Gibbs chain mixes rapidly across values of $K$, reflecting the efficiency of the collapsed representation. The RJMCMC chains show comparable mixing in $K$, with all methods making frequent transitions between cluster configurations. The $\alpha$ traces confirm that the concentration parameter is well-identified and that all methods converge to the same marginal posterior.

\begin{figure}[htbp]
\centering
\includegraphics[width=\textwidth]{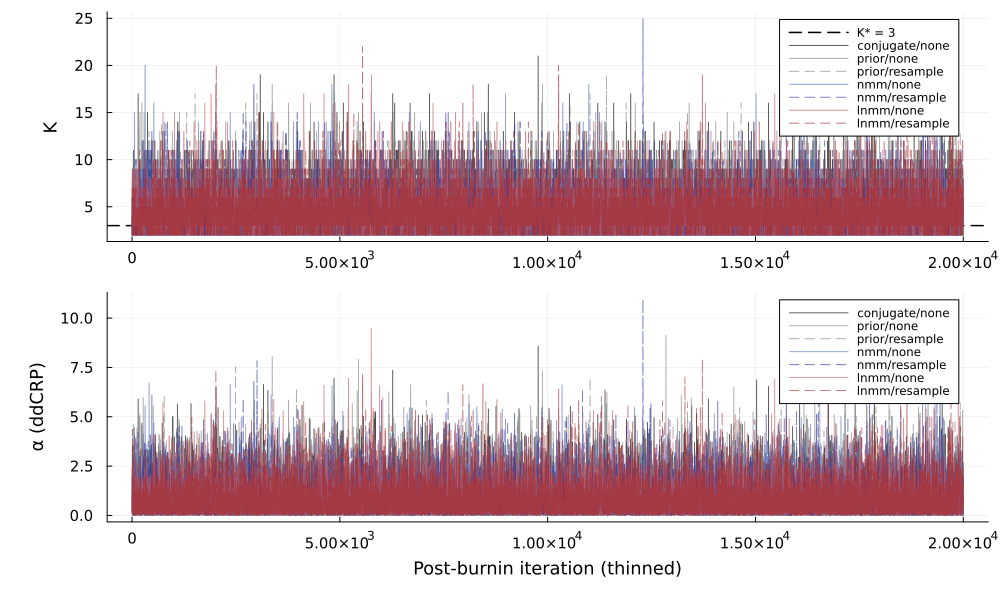}
\caption{Trace plots for the Poisson simulated data. Top: number of clusters $K$ over post-burn-in iterations (thinned); the dashed horizontal line marks the true value $K^* = 3$. Bottom: ddCRP concentration parameter $\alpha$.}
\label{fig:pois_traces}
\end{figure}

Figure~\ref{fig:pois_pk} shows the posterior distribution $P(K \mid \cdot)$ for each method. All samplers place the majority of posterior mass at $K = 3$ and the distributions are in close agreement across methods, providing strong evidence that the RJMCMC samplers correctly target the same posterior as the marginalised Gibbs baseline in this example.

\begin{figure}[htbp]
\centering
\includegraphics[width=0.75\textwidth]{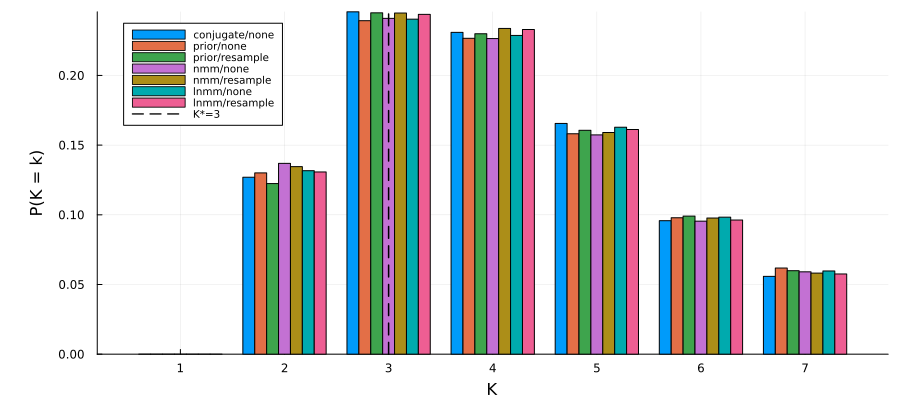}
\caption{Marginal posterior distribution $P(K \mid \bc, \by)$ for all sampler configurations. The dashed vertical line marks the true number of clusters $K^* = 3$.}
\label{fig:pois_pk}
\end{figure}

\subsection{Application to Old Faithful eruption data}

We apply the method to the classic Old Faithful geyser dataset, modelling eruption durations as the response and using waiting times between eruptions as covariate information in the ddCRP prior.

\subsubsection{Data description}

The Old Faithful dataset consists of $n = 272$ observations of eruptions of the Old Faithful geyser in Yellowstone National Park. For each eruption $i$, the data are $y_i > 0$, the eruption duration (in minutes) and $w_i > 0$, the waiting time until the next eruption (in minutes). The dataset exhibits well-known bimodal behaviour: short eruptions (around 2 minutes) are associated with short waiting times (around 55 minutes), while long eruptions (around 4.5 minutes) are associated with long waiting times (around 80 minutes). We construct the distance matrix based on the absolute difference in waiting times ($d_{ij} = |w_i - w_j|$), encouraging eruptions with similar waiting times to cluster together. The ddCRP prior uses exponential decay $f(d) = \exp(-s \cdot d)$. We fix the decay scale to $s = 0.2$ and infer the concentration parameter $\alpha$ within the MCMC algorithm using the data-augmented Gibbs step described in Section~\ref{sec:hyperparams}.

\subsubsection{Model specification and marginalisation}

Let $z_i(\bc) \in \{1, \ldots, |z(\bc)|\}$ denote the cluster index of observation $i$ under the partition induced by observation assignment vector $\bc$. We model eruption durations using a gamma likelihood, a flexible choice for positive continuous data. Specifically, we write
$$y_i \mid \bc, \balpha, \bbeta \sim \text{Gamma}(\alpha_{z_i(\bc)}, \beta_{z_i(\bc)}),$$
where $\alpha_k > 0$ and $\beta_k > 0$ are the cluster-specific shape and rate parameters for cluster $k$. Note that $\alpha_k$ and $\beta_k$ here are model parameters and are distinct from the ddCRP concentration $\alpha$. We assign independent gamma priors to both cluster parameters:
\begin{align*}
\alpha_k &\sim \text{Gamma}(a_\alpha, b_\alpha) \\
\beta_k &\sim \text{Gamma}(a_\beta, b_\beta)
\end{align*}
with hyperparameters $a_\alpha = 2$, $b_\alpha = 0.5$, $a_\beta = 2$, and $b_\beta = 0.5$, yielding weakly informative priors centred near the empirical mean eruption duration.

Because the gamma prior on the rate parameter $\beta_k$ is conjugate to the gamma likelihood, we can analytically marginalise out $\beta_k$ to obtain a marginal cluster likelihood conditional only on the shape parameter $\alpha_k$. Let $z^k(\bc)$ denote the set of indices for observations in cluster $k$, and $n_k = |z^k(\bc)|$ be the number of observations in that cluster. The integration over the prior distribution of $\beta_k$ proceeds as follows:
\begin{align*}
f(\by_{z^k(\bc)} \mid \alpha_k) &= \int_0^\infty \left( \prod_{i \in z^k(\bc)} \frac{\beta_k^{\alpha_k}}{\Gamma(\alpha_k)} y_i^{\alpha_k - 1} e^{-\beta_k y_i} \right) \frac{b_\beta^{a_\beta}}{\Gamma(a_\beta)} \beta_k^{a_\beta - 1} e^{-b_\beta \beta_k} d\beta_k \\
&= \left( \prod_{i \in z^k(\bc)} \frac{y_i^{\alpha_k - 1}}{\Gamma(\alpha_k)} \right) \frac{b_\beta^{a_\beta}}{\Gamma(a_\beta)} \int_0^\infty \beta_k^{n_k \alpha_k + a_\beta - 1} e^{-\beta_k \left(b_\beta + \sum_{i \in z^k(\bc)} y_i\right)} d\beta_k \\
&= \left( \prod_{i \in z^k(\bc)} \frac{y_i^{\alpha_k - 1}}{\Gamma(\alpha_k)} \right) \frac{b_\beta^{a_\beta}}{\Gamma(a_\beta)} \frac{\Gamma(n_k \alpha_k + a_\beta)}{\left(b_\beta + \sum_{i \in z^k(\bc)} y_i\right)^{n_k \alpha_k + a_\beta}}
\end{align*}

This marginalisation substantially simplifies the state space. Instead of proposing jointly for two non-conjugate parameters per cluster, the trans-dimensional problem is reduced such that we only need to infer and update a single non-conjugate parameter, $\alpha_k$, for each cluster. The target joint posterior distribution for the assignments and shape parameters is therefore given by:
$$ \pi(\bc, \balpha \mid \by, \eta) \propto \pi(\bc \mid \eta) \prod_{k=1}^{|z(\bc)|} f(\by_{z^k(\bc)} \mid \alpha_k) \pi(\alpha_k), $$
where $\eta$ encapsulates the ddCRP hyperparameters.

\subsubsection{Inference}

Since the prior on the shape parameter $\alpha_k$ is not conjugate to the resulting marginal likelihood, we employ a hybrid MCMC scheme. The state space consists of the assignments $\bc$ and the vector of shape parameters $\balpha = (\alpha_1, \ldots, \alpha_{|z(\bc)|})$.

For updates to the partition $\bc$, we use the RJMCMC framework. For birth moves, when a new cluster is formed, we must propose a new shape parameter $\alpha_{\text{new}}$. To target the region of high posterior density efficiently, we utilise an Inverse Gamma moment-matching proposal. For fixed-dimensional moves, where an observation reassignment transfers the moving set between existing clusters without altering the total number of clusters, we employ a posterior resampling strategy. We resample the affected shape parameters directly from moment-matched approximate posteriors evaluated on the updated cluster memberships, which prevents parameter-assignment mismatch and maintains robust mixing efficiency.

Conditional on a fixed partition $\bc$, the shape parameters $\alpha_k$ are updated using a Metropolis-Hastings step. Because the shape parameter is strictly positive, we employ a random walk on the logarithmic scale. We propose a new value $\alpha_k^*$ such that
$$ 
\log \alpha_k^* \sim \text{Normal}(\log \alpha_k, \sigma^2), 
$$
where $\sigma^2$ is a tuning parameter adjusted during burn-in to achieve an optimal acceptance rate. The acceptance probability for this move is:
$$ p_\text{accept} = \min\left(1, \frac{\pi(\alpha_k^*) f(\by_{z^k(\bc)} \mid \alpha_k^*)}{\pi(\alpha_k) f(\by_{z^k(\bc)} \mid \alpha_k)} \times \frac{\alpha_k^*}{\alpha_k} \right) $$
where the final term $\alpha_k^* / \alpha_k$ is the Jacobian determinant associated with the log-scale proposal. We assign an Exponential(0.01) prior to the ddCRP concentration parameter $\alpha$ and infer it within the MCMC algorithm using the data-augmented Gibbs step described in Section~\ref{sec:hyperparams}.

\subsubsection{Results}

The main RJMCMC sampler was run for 20,000 iterations (5,000 burn-in), Figure~\ref{fig:of_k} shows the posterior distribution of the number of clusters $K$ (left panel) and trace plots of $K$ and the log-posterior over the post-burn-in chain (right panel). The posterior concentrates strongly on $K = 2$, consistent with the well-documented bimodal behaviour of the geyser: eruptions cluster into a short-duration group (approximately 2 minutes) associated with short waiting times, and a long-duration group (approximately 4.5 minutes) associated with long waiting times. 

The trans-dimensional proposals achieved a 0.6\% acceptance rate for both birth and death moves, which is standard for RJMCMC algorithms navigating dimension-changing parameter spaces, as discussed in Section~\ref{sec:proposals}. Fixed-dimensional parameter updates mixed substantially more efficiently, with an acceptance rate of 33.1\%.

The Markov chain exhibited healthy mixing across the target posterior. The effective sample size (ESS) for the log-posterior was 4193.9, and the ESS for the number of clusters, $K$, was 325.8. For the observation-level shape parameters, we aggregated the ESS over all $n$ observations; the median ESS was 2777.9, with a worst-case minimum ESS of 71.3. This indicates that even the most slowly mixing individual parameters explored their conditional posteriors adequately.

\begin{figure}[htbp]
\centering
\begin{subfigure}[b]{0.35\textwidth}
  \includegraphics[width=\textwidth]{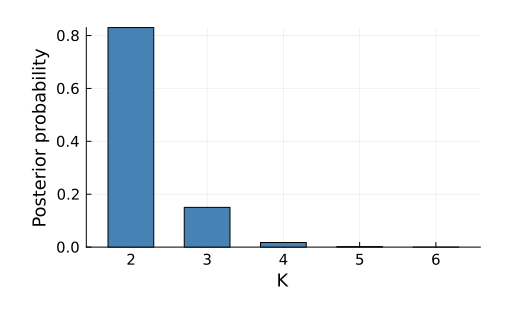}
  \caption{Posterior distribution of $K$}
\end{subfigure}\hfill
\begin{subfigure}[b]{0.62\textwidth}
  \includegraphics[width=\textwidth]{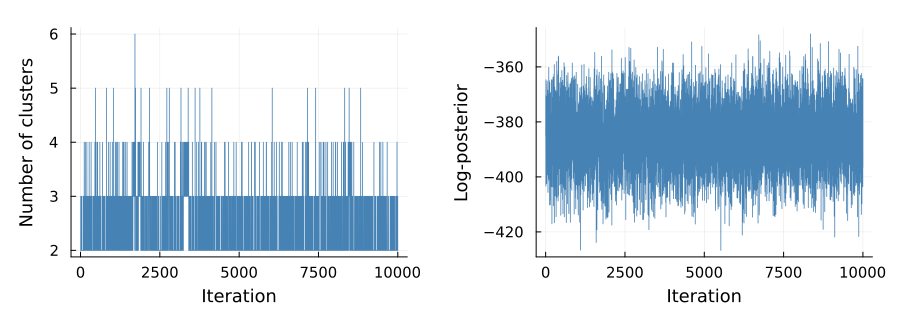}
  \caption{Trace plots (post burn-in).}
\end{subfigure}
\caption{Posterior inference on the number of clusters $K$ for the Old Faithful analysis. Left: posterior probability mass function of $K$, strongly concentrated at $K = 2$. Right: trace plots of $K$ (top) and log-posterior (bottom) over the 15000 post-burn-in iterations, showing good mixing with rapid transitions between the dominant $K = 2$ state and occasional visits to $K = 3$.}
\label{fig:of_k}
\end{figure}

\paragraph{MAP cluster assignments}
Figure~\ref{fig:of_scatter} shows the MAP cluster assignments, where each observation is coloured according to the MAP cluster assignment, obtained as the highest log-posterior sample among those with $K = K^*$ clusters, where $K^*$ denotes the posterior mode of $K$. For each observation $i$, we compute the posterior link probability $P(c_i = j \mid K = K^*)$. Arrows are drawn from each observation to its three most probable link targets, together with any additional targets whose posterior link probability exceeds $0.02$. 

The two clusters cleanly separate eruptions according to waiting time, with the boundary falling near $w \approx 65$ minutes. The MAP partition identifies a short-eruption cluster characterised by durations near 2 minutes and waiting times below 65 minutes, and a long-eruption cluster with durations near 3--5 minutes and waiting times above 65 minutes.

\begin{figure}[htbp]
\centering
\includegraphics[width=0.65\textwidth]{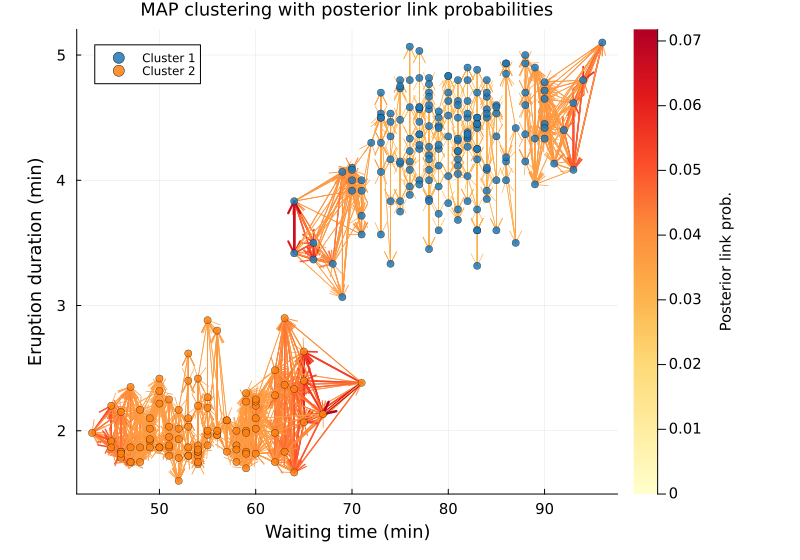}
\caption{MAP cluster assignments for the Old Faithful dataset ($n = 272$). Points are coloured by cluster membership under the maximum a posteriori partition. The two clusters correspond to short eruptions with short waiting times and long eruptions with long waiting times, with a clear separation near $w \approx 65$ minutes.}
\label{fig:of_scatter}
\end{figure}

\paragraph*{Posterior predictive check}
Figure~\ref{fig:of_ppd} shows the posterior predictive check. For each observation $i$, we plot the observed eruption duration alongside the posterior predictive mean and 95\% credible interval, obtained by sampling $\beta_k$ from its posterior conditional on the cluster assignment and shape parameter at each iteration, then drawing replicate eruption durations. The model captures the bimodal response distribution well: observations in both clusters fall within their respective predictive intervals for the great majority of cases, and the posterior predictive means track the two clusters closely. There is no evidence of systematic misfit.

\begin{figure}[htbp]
\centering
\includegraphics[width=0.88\textwidth]{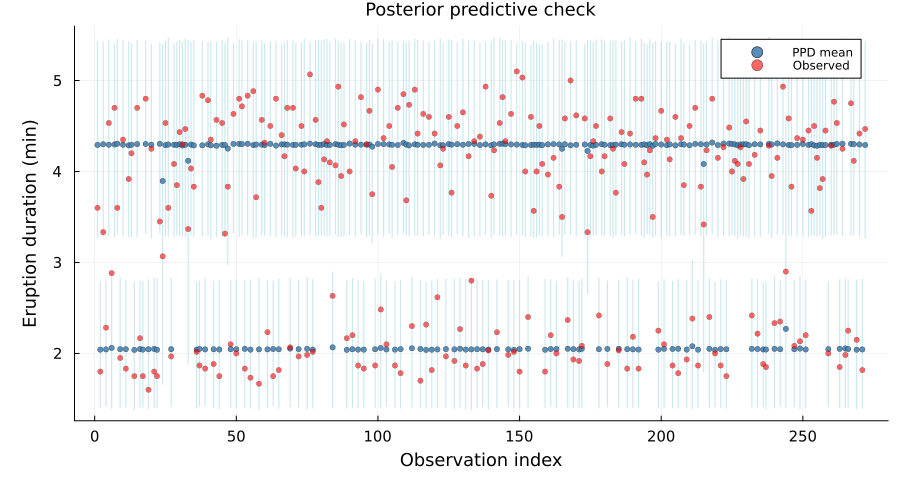}
\caption{Posterior predictive check for the Old Faithful analysis. Red points are observed eruption durations; blue points and vertical bars show the posterior predictive mean and 95\% credible interval at each observation index. Observations are not reordered, so the alternating structure reflects the natural ordering of the data. The model adequately captures the bimodal response distribution with tight predictive intervals throughout both clusters.}\label{fig:of_ppd}
\end{figure}

\section{Discussion}

We have developed an RJMCMC framework for posterior inference in ddCRP models with non-conjugate cluster likelihoods, extending the Gibbs sampler of \citet{blei2011} to settings where cluster-specific parameters cannot be analytically marginalised. The central methodological challenge is that each observation reassignment may alter the number of clusters and hence the dimension of the parameter space; we address this through the RJMCMC framework of \citet{green1995}, introducing data-driven moment-matching proposals that centre new cluster parameters on summary statistics of the observations in the moving set. For birth and death moves we compare Normal, InverseGamma, and log-normal moment-matching proposals; InverseGamma proposals are consistently preferred for scale parameters, since their structural alignment with InverseGamma priors avoids the systematic under-clustering bias exhibited by log-normal proposals in the overlapping scenario. For fixed-dimensional moves, resampling all affected cluster parameters from approximate posteriors evaluated on the updated cluster memberships yields the largest gains in mixing efficiency, achieving effective sample sizes more than three times larger than no-update strategies in the simulation study. We derive an exact data-augmented Gibbs step for the ddCRP concentration parameter $\alpha$ by introducing exponential auxiliary variables whose densities cancel the normalising constants in the conditional posterior, yielding an exact Gamma full conditional at each iteration. The simulation study confirms sampler correctness by comparing marginalised Gibbs and RJMCMC outputs in the conjugate Poisson setting, and the Old Faithful application demonstrates that the methodology recovers accurate clustering structure in a non-conjugate continuous-data setting. The framework also extends to prediction at unobserved covariate locations, where outcomes for new points with known covariates but no observed responses are required. A notable complication is that the ddCRP lacks marginal invariance, so that posterior samples from the fitted chain cannot be combined directly with an augmented distance matrix; we derive the exact posterior predictive distribution and a computationally tractable sequential approximation that reuses samples from the observed-data chain in Appendix 2.

Several limitations of the current methodology merit discussion. The most fundamental is that birth and death acceptance rates of approximately 1--2\% are an inherent feature of trans-dimensional moves in non-conjugate settings: even a proposal well-centred at the local posterior mode will yield low acceptance rates due to the unfavourable geometry of the joint parameter space, and this rate cannot be substantially improved by tuning alone. The decay function $f$ is fixed to the exponential form throughout; the scale $s$ is inferred but the functional form itself is not, and when the true distance-decay relationship is non-exponential this misspecification may introduce bias into both the partition inference and the posterior on $s$. The simulation study is restricted to Poisson and gamma mixture models, and behaviour on other non-conjugate likelihoods, such as Weibull, Student-$t$, skew-normal, or negative binomial with unknown dispersion, is not characterised; the proposal recommendations in Section~\ref{sec:proposals} should therefore be applied cautiously in settings that differ materially from those studied here. Finally, each MCMC iteration requires $O(n)$ observation assignment updates; for the datasets considered here this is not prohibitive, but for substantially larger collections of observations the per-iteration cost would become problematic without approximation.

Several natural directions follow from this work. Gradient-based proposals, such as Metropolis-adjusted Langevin steps, could in principle be used within birth moves for smooth likelihoods to improve the alignment of proposed parameters with the local posterior, exploiting automatic differentiation of the cluster likelihood with respect to cluster parameters. For the distance-decay relationship, a non-parametric prior, such as a monotone Gaussian process, could replace the parametric exponential, allowing the distance-decay relationship to be inferred from data rather than specified \textit{a priori}. Parallel tempering, running multiple chains at different inverse temperatures and exchanging states periodically, is a broadly applicable strategy for improving mixing over $K$ and would be straightforward to integrate with the existing RJMCMC framework. Extensions to multivariate responses, hierarchical models with shared cluster structure across multiple populations, and longitudinal settings where cluster membership evolves over time are all, in principle, accommodated by the general dimension-matching argument developed here, since the RJMCMC acceptance probability depends only on the proposal density and its Jacobian, not on the specific form of the cluster likelihood.

The methodology described in this article is implemented in an open-source Julia package designed to provide a flexible, general-purpose framework for distance-dependent Chinese Restaurant Process modelling. \texttt{DistanceDependentCRP.jl} provides both conjugate samplers (marginalised Gibbs) and non-conjugate samplers (RJMCMC) for all included models and implements the full suite of proposal strategies described in Section~\ref{sec:proposals}.

\if1\anon
{
 \section*{Acknowledgments}
    This work was supported by a UKRI Future Leaders Fellowship under grant [MR/X034992/1] and the University of Nottingham. The computations described in this paper were performed using the University of Birmingham's BlueBEAR HPC service, which provides a High Performance Computing service to the University's research community. See \url{http://www.birmingham.ac.uk/bear} for more details.
} \fi

\section*{Disclosure Statement}
The authors report there are no competing interests to declare.

\section*{Data availability}
All code and data used in this analysis can be found in a repository at \url{https://github.com/jmarsh96/non-conjugate-ddcrp}. The original Old Faithful data set can be found in the \textit{datasets} R package.

\bibliography{references.bib}

\end{document}